\newtheorem{definition}{Definition}
\newtheorem{theorem}{Theorem}
\newtheorem{lemma}[theorem]{Lemma}
\title{Structurally Valid Log Generation using FSM-GFlowNets}
\author{\IEEEauthorblockN{Riya Samanta}
\IEEEauthorblockA{\textit{Techno India University, Kolkata, India} \\
Email: riya.s@technoindiaeducation.com}}
\begin{document}

\maketitle

\begin{abstract}
% \textcolor{red}{Add a line introducing the problem, stating the problem statement-emphasizing why its a challenging, yet important problem; before coming to a solution approach.}
Generating structurally valid and behaviorally diverse synthetic event logs for interaction-aware models is a challenging yet crucial problem, particularly in settings with limited or privacy-constrained user data. Existing methods—such as heuristic simulations and LLM-based generators—often lack structural coherence or controllability, producing synthetic data that fails to accurately represent real-world system interactions. This paper presents a framework that integrates Finite State Machines (FSMs) with Generative Flow Networks (GFlowNets) to generate structured, semantically valid, and diverse synthetic event logs. Our FSM-constrained GFlowNet ensures syntactic validity and behavioural variation through dynamic action masking and guided sampling. The FSM, derived from expert traces, encodes domain-specific rules, while the GFlowNet is trained using a flow-matching objective with a hybrid reward balancing FSM compliance and statistical fidelity. We instantiate the framework in the context of UI interaction logs using the UIC HCI dataset, but the approach generalizes to any symbolic sequence domain. Experimental results based on distributional metrics  show that our FSM-GFlowNet produces realistic, structurally consistent logs, achieving, for instance, Under the real user logs baseline, a KL divergence of 0.2769 and Chi-squared distance of 0.3522 (significantly outperforming GPT-4o's 2.5294/13.8020 and Gemini's 3.7233/63.0355), alongside a leading bigram overlap of 0.1214 (Vs GPT-4o's 0.0028 and Gemini's 0.0007). A downstream use case—intent classification—demonstrates that classifiers trained solely on our synthetic logs produced from FSM-GFlowNet generalizes well to real-world data, achieving above 77 \% accuracy on real user sessions, validating the framework’s broad applicability.

% \textcolor{red}{highlight the key results in numeric form}

\end{abstract}

\begin{IEEEkeywords}
Finite State Machines (FSMs), Generative Flow Networks (GFlowNets), Structured Sequence Modeling, Synthetic Log Generation, Human-Computer Interaction (HCI), Intent Recognition,

\end{IEEEkeywords}

\section{Introduction}
\vspace{-0.05in}

Across domains such as Human-Computer Interaction (HCI), cybersecurity, robotic planning, and process mining, system behaviors are naturally expressed as structured sequences of symbolic events \cite{carrera2023exploring,theis2020hci}. These \textit{event logs}—annotated with timestamps, interaction types, and contextual states—form the foundation for tasks like anomaly detection, user modeling, and workflow optimization \cite{pradhan2025getting}. Yet, access to such logs remains severely limited due to privacy regulations, inconsistent instrumentation, and sparse availability. Consequently, synthetic log generation has emerged as a pragmatic alternative to support data-driven interaction modeling in privacy-constrained or low-resource environments \cite{samanta2024ctg}.

Structured user interaction logs, in particular, encode rich behavioral signals such as intent, task strategy, cognitive load, and usability patterns. When semantically coherent and temporally grounded, these logs facilitate adaptive interface design and intent-aware systems \cite{pradhan2025getting}. However, real-world UI logs are often noisy, unlabeled, and fragmented—limiting their utility for model training and evaluation \cite{wang2024open,le2023exploring,samanta2024ctg}. They rarely capture edge cases or optimal trajectories, further underscoring the need for synthetic generation strategies that are both scalable and structure-preserving \cite{balog2025user}.

A \textit{structured UI task} entails a goal-driven sequence of discrete interactions—e.g., navigating folders, editing text, or performing computations—each abstracted as a symbolic (state, action) pair. The resulting \textit{structured symbolic sequence} encodes both procedural logic and interface context \cite{theis2020hci,berkovitch2024identifying}. Finite State Machines (FSMs), Petri Nets, and graph-based models are well-suited for representing these interactions due to their formal expressiveness and traceability \cite{theis2019behavioral,kanade2023fsm}. FSMs, in particular, offer deterministic control over valid transitions and strong guarantees on syntactic correctness—making them ideal for capturing the structural backbone of interaction workflows.

Several approaches have attempted to generate synthetic UI logs, including heuristic rule engines, replay-based scripting \cite{xuagenttrek}, and prompt-based generation using Large Language Models (LLMs) like GPT-4 \cite{confidentai2024synthetic,balog2025user}. While LLMs exhibit impressive sequence modeling capabilities, they often suffer from hallucinations, drift in logical flow, and lack mechanisms for enforcing strict task-level constraints \cite{huang2025survey}. As a result, they struggle to produce logs that are simultaneously coherent, controllable, and semantically aligned with real UI interactions.

FSMs, on the other hand, offer robust structural grounding but are inherently non-generative—they can validate sequences but cannot explore behavioral variation or synthesize new trajectories \cite{kanade2023fsm}. This presents a critical gap: \textit{how can we generate diverse synthetic UI logs that are both structurally valid and semantically plausible?}

We address this challenge by introducing a hybrid framework that integrates FSMs with Generative Flow Networks (GFlowNets)—a class of probabilistic models designed for sampling structured objects with probability proportional to a learned reward \cite{krichel2024generalization,deleu2022bayesian}. In our \textit{FSM-Constrained GFlowNet}, the FSM defines permissible state transitions, which are used to dynamically mask the GFlowNet action space at each timestep. This ensures that every sampled trajectory strictly adheres to valid task logic while still supporting diverse generative exploration through flow-matching. The FSM is reverse-engineered from expert-level trajectories synthesized using GPT-4o, and training is guided by a hybrid reward function balancing FSM compliance with statistical fidelity to real-world logs from the UIC HCI dataset \cite{theis2020hci,theis2019behavioral}. The resulting synthetic logs exhibit strong structural and behavioral alignment with authentic interactions, significantly outperforming unconstrained LLM baselines across multiple distributional metrics. Moreover, we demonstrate their downstream utility in training intent classifiers that generalize to real user data.

% \vspace{-0.1in}
% \subsection*{Contributions}
% \vspace{-0.01in}
Our primary contributions are summarized as follows:
\begin{itemize}
   \item We present a framework that integrates Finite State Machines with Generative Flow Networks to enable the controlled synthesis of structurally valid and semantically rich symbolic event sequences, and instantiate it in the domain of user interface interaction modeling using FSMs derived from expert GPT-4o traces and real user data from the UIC HCI dataset.
   \item We evaluate our method against LLM-based log generation using quantitative distributional metrics, including KL divergence, Chi-squared distance, entropy, and bigram overlap, under both real and ground truth baselines.
   \item We validate the practical utility of our generated logs via a downstream use case that is intent classification, where models trained solely on FSM-GFlowNet logs achieve competitive performance on real user logs.
\end{itemize}

The remainder of the paper is organized as follows. Section~\ref{sec:related} reviews related work. Section~\ref{data} describes the dataset. Section~\ref{sec:meth} presents the proposed FSM-GFlowNet framework followed by theoretical validations in Section~\ref{sec:th} Section~\ref{sec:eval} reports experimental results. Section~\ref{sec:use} discusses a downstream use case. Section~\ref{sec:con} concludes the paper.

\section{Related Work}
\label{sec:related}
Synthetic log generation for human-machine interaction remains an emerging area, with methods ranging from rule-based simulations to advanced machine learning and generative models. Below, we categorize the key approaches and identify critical gaps that motivate our proposed solution.

\subsection*{Traditional Log Generation}

Early methods for synthetic UI log generation employed rule-based simulations or replay engines that mimic recorded behavior. Martinez-Rojas et al.~\cite{martinez2022tool} introduced a tool-supported method to extend logs using templates and event variability functions, but these logs are limited in scope and generalizability. Replay-based scripting methods are often used in Robotic Process Automation (RPA) pipelines to mimic user tasks, but they remain restricted to pre-recorded behaviors and do not capture unseen task paths~\cite{reijers2023rpa}. These methods fall short in producing dynamic, long-horizon logs with realistic human-like variability.

\subsection*{Learning-Based Synthetic Log Generation}
 
In industrial contexts, Generative Adversarial Networks (GANs) have been used to simulate log files for anomaly detection in cyber-physical systems. Partovian et al.~\cite{partovian2024leveraging} use CTGAN and SeqGAN to generate logs for smart-troubleshooting, but these methods primarily target tabular anomaly enrichment and are not optimized for structured behavioral sequence modeling. Separately, the CTG-KrEW framework~\cite{samanta2024ctg} improves semantic correlation in skill-oriented tabular data, yet it does not model sequential dependencies or application state transitions critical to UI tasks.

Recurrent Neural Networks (RNNs) and Long Short-Term Memory (LSTM) models have shown potential in generating synthetic sequential logs in fields such as petroleum engineering. Zhang et al.~\cite{zhang2018synthetic} proposed using LSTMs to reconstruct missing well logs, demonstrating their capacity to preserve temporal dependencies. However, these approaches assume homogenous time-series structures and lack symbolic constraints, which are essential for modeling interface state machines in HCI systems.

\subsection*{LLM-Based Log Synthesis and Logging}

Large Language Models (LLMs) have recently been adopted for structured task generation. UniLog~\cite{xu2024unilog} demonstrates that GPT-based models can generate code logging statements via in-context learning, showing impressive performance in automated logging. Similarly, Li et al.~\cite{li2024exploring} evaluated LLMs for software log statement generation using LogBench and showed their advantage in semantic comprehension. However, these models often suffer from hallucinations, lack precise control over transitions, and cannot enforce application-specific rules~\cite{huang2025survey}. Our approach addresses this limitation by using LLMs only to derive optimal FSM paths, not for full sequence generation.

\subsection*{Novelty of Our Approach}
Finite State Machines (FSMs) have traditionally been employed in HCI to model deterministic workflows and enforce structural validity~\cite{kanade2023fsm,king2024thoughtful}. On the other hand, Generative Flow Networks (GFlowNets)~\cite{deleu2022bayesian,krichel2024generalization} represent a recent probabilistic modeling paradigm aimed at sampling structured objects under reward-driven constraints. While GFlowNets have demonstrated promising results in molecular design and combinatorial generation, their application to human-machine interaction or FSM-constrained sequence generation remains unexplored.

To the best of our knowledge, no existing work systematically integrates FSM-driven symbolic constraints within a GFlowNet framework for FSM-constrained symbolic sequence generation, with UI log generation as a core case study. Our approach fills this critical gap by dynamically constraining the GFlowNet's action space based on FSM transitions, enabling the synthesis of structurally valid, stochastically diverse, and semantically meaningful synthetic logs. This hybrid framework supports scalable generation suitable for downstream analysis tasks, such as user intent discovery and behavioral modeling in human-machine systems.

\begin{table*}[!t]
\centering
\caption{Structure of Log Files in the \texttt{simple} Folder}
\label{log}
\begin{tabular}{|c|l|}
\hline
\textbf{Field} &  \textbf{Description} \\
\hline
Timestamp  & Time of the event in milliseconds \\
\hline
Action Code & Encoded event types (e.g., mouse move, app switch, close window) \\
\hline
Metadata & Event-specific context: mouse coordinates or window details\\
\hline
\end{tabular}
\vspace{-0.1in}
\end{table*}

\section{Dataset}
\label{data}
We utilize the \textit{Human-Computer Interaction Logs} dataset released by the University of Illinois Chicago (UIC HCI)\cite{theis2020hci}, which records detailed interaction traces of ten real participants performing predefined data processing tasks in a controlled Windows environment using only three standard applications: Notepad, Calculator, and File Explorer. The dataset is organized into two task categories—\textit{Simple} and \textit{Complex}—each containing five anonymized log files corresponding to five different users.

In this study, we focus exclusively on the \textit{Simple} category, which is characterized by clearly defined, structured sequences of interactions. In this task, each participant was provided with 30 raw text files located in the `Documents/CompanyData' folder. Each file contained revenue and expense information for a single product over a certain time period. Participants were instructed to generate 15 summaries by pairing up two files at a time, computing the combined revenue, expenses, and profit for each pair using Calculator, and recording the results into a `summary.txt' file using Notepad. The summary file was saved inside the `Documents/CompanyData/Summaries' directory. All navigation between folders and file access operations were to be done strictly via File Explorer, adhering to the application constraints.

The dataset files are formatted as timestamped logs, where each row represents a user-generated event such as a keyboard stroke, mouse movement, or application context switch, along with additional metadata including window identifiers and screen coordinates. Throughout this manuscript, we refer to this complete sequence of user interaction—from file browsing to summary writing as the \textbf{Task}. The structural format of the log files used from the \textit{Simple} category is detailed in Table \ref{log}.

\section{Methodology}
\label{sec:meth}
Our proposed framework, FSM-GFlowNet, is a general-purpose approach for generating structured, semantically valid, and behaviorally diverse symbolic event sequences. It integrates Finite State Machines (FSMs), which encode domain-specific structural rules, with Generative Flow Networks (GFlowNets), which enable stochastic, reward-driven trajectory sampling. The objective is to synthesize long and coherent interaction sequences that obey logical constraints while supporting diversity and generalization—applicable across domains such as UI interaction modeling, robotic planning, and event-driven system simulation.

In this work, we instantiate the framework for the generation of synthetic user interface (UI) interaction logs to demonstrate its effectiveness in a realistic and behaviorally grounded setting. The methodology comprises three core components that together ensure structural correctness, semantic richness, and controlled generation: (1) LLM-Based Optimal Path Construction, (2) Finite State Machine Modeling, and (3) FSM-Constrained GFlowNet-Based Log Generation.

To assess the quality of the generated sequences, we conduct a multi-faceted evaluation. First, we compare FSM-GFlowNet logs with both real user logs and unconstrained LLM-generated logs using distributional metrics such as KL divergence, Chi-squared distance, entropy, and bigram overlap. Second, we validate the practical value of the generated data through a downstream task—intent classification—testing whether models trained solely on synthetic logs can generalize to real user logs.

\subsection{LLM-Based Optimal Path Construction}
\label{subsec:gt-construction}

Given the limited scale, lack of labelled data , and abstract interaction semantics inherent in the UIC HCI dataset~\cite{theis2019behavioral}, applying supervised learning approaches to model user behavior is not practical. To overcome this, we adopt a prompt-based engineering strategy~\cite{berkovitch2024identifying} using state-of-the-art Large Language Models (LLMs), particularly GPT-4o~\cite{berti2023abstractions,petruzzellis2024benchmarking,gpt4o_wikipedia}, to synthesize an expert-level, semantically coherent UI interaction trajectory.

The objective of this step is to construct a canonical, efficient, and error-free sequence of UI actions that a proficient user would perform to complete the structured data processing task using the core applications mentioned in the \textbf{Task}. This high-quality LLM-generated trajectory is referred to as the \textit{Ground Truth (GT) path} and fulfills a dual role within our framework.

First, it provides a reference for \textbf{reverse-engineering a Finite State Machine (FSM)} that encodes valid UI state transitions and interaction constraints. This FSM is later used to constrain the trajectory sampling in our proposed FSM-guided GFlowNet. The FSM design ensures that all sampled trajectories are structurally valid and aligned with task-specific logic. Second, and equally importantly, the GT path acts as a \textbf{evaluation benchmark during evaluation}. As detailed in Section~\ref{sec:eval}, the GT log is included alongside real and LLM-generated baselines for assessing the statistical and structural alignment of synthetic UI logs. This allows us to quantify how well each generation method approximates the idealized task execution from a planning and transition-efficiency perspective.

To ensure reproducibility and minimize hallucinations often observed in open-ended generative outputs~\cite{huang2025survey}, we formulate a task-specific prompt encoding the task description, application constraints, expected file manipulations, and UI schema. The prompt is designed to elicit deterministic, context-aware sequences that adhere to practical software behaviors. The complete prompt used for optimal path synthesis (or GT) is shown in Figure~\ref{fig:prompt}.

\begin{figure*}[!t]
    \centering
    \includegraphics[width=0.9\linewidth]{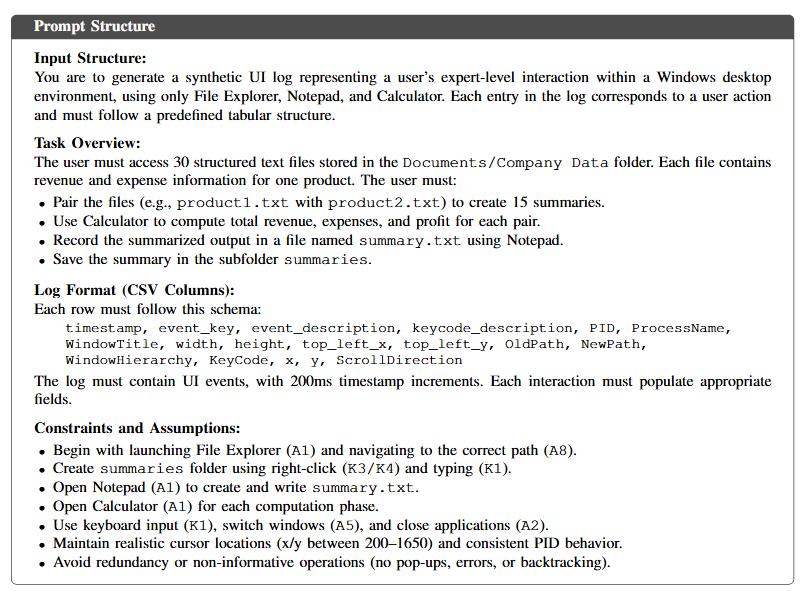}  
  \caption{Prompt design for generating optimal user interaction logs to establish the benchmark ground truth for the \textbf{Task}.}
    \label{fig:prompt}
    \vspace{-0.1in}
\end{figure*}

\subsection{Reverse-engineered Finite State Machine (FSM) Modeling}
To formalize valid UI interaction flows and enforce structural correctness during synthetic log generation, we construct a Finite State Machine (FSM) based on the optimal interaction trajectory synthesized through LLM prompting. Rather than assuming a predefined model, we reverse-engineer the FSM by analyzing the LLM-generated sequence and segmenting it into semantically meaningful application contexts and event transitions.

The FSM serves as a behavioural scaffold that captures permissible transitions among discrete UI contexts—such as navigating folders, opening applications, typing in Notepad, and performing calculations—based on both application semantics and realistic usage patterns observed in the benchmark path.

We define the FSM as a 5-tuple:

\[
\mathcal{M} = (Q, \Sigma, \delta, q_0, F)
\]
\begin{IEEEeqnarray*}{rCl}
Q      &=& \{ S_1, S_2, S_3, S_4 \} \quad \text{(application contexts)} \\
\Sigma &=& \{ A1, A2, A5, A7,\\*
&&\qquad A8, K1, K3, K4, M \} \quad \text{(interaction events)} \\
\delta &:& Q \times \Sigma \rightarrow Q \quad \text{(transition function)} \\
q_0    &=& S_1 \quad \text{(initial state)} \\
F      &=& \{ \bullet \} \quad \text{(terminal state)}
\end{IEEEeqnarray*}

\noindent The transition function \( \delta \) is defined as:
\begin{IEEEeqnarraybox}[\IEEEeqnarraystrutmode\IEEEeqnarraystrutsizeadd{2pt}{2pt}]{l}
\delta(S_1, A8) = S_2, \quad
\delta(S_1, A1) \in \{S3,S4\}, \quad
\delta(S_1, M) = S_1, \\
\delta(S_1, A2) \in F \\
\delta(S_2, A1) \in \{S_3, S_4\}, \quad
\delta(S_2, A8) = S_1, \\
\delta(S_2, e) = S_2, \ \forall e \in \{K3, K4, M\}, \\
\delta(S_3, A1) = S_4, \quad
\delta(S_3, A2) = S_1, \\ 
\delta(S_3, e) = S_3, \ \forall e \in \{K1, M\}, \\
\delta(S_4, A2) = S_1, \quad
\delta(S_4, e) = S_4, \ \forall e \in \{K1, M\}
\end{IEEEeqnarraybox}

% \vspace{-0.05in}
To ensure semantic accuracy and domain relevance, the FSM was validated and refined with input from two HCI experts familiar with Windows-based UI conventions. Their feedback helped clarify transition ambiguities, consolidate equivalent states, and verify behavioral plausibility. The final FSM is illustrated in Figure~\ref{fig:fsm-diagram}, where nodes represent abstracted application contexts and edges denote allowed transitions driven by specific interaction events.

\begin{figure}[!t]
    \centering
    \includegraphics[width=0.40\textwidth]{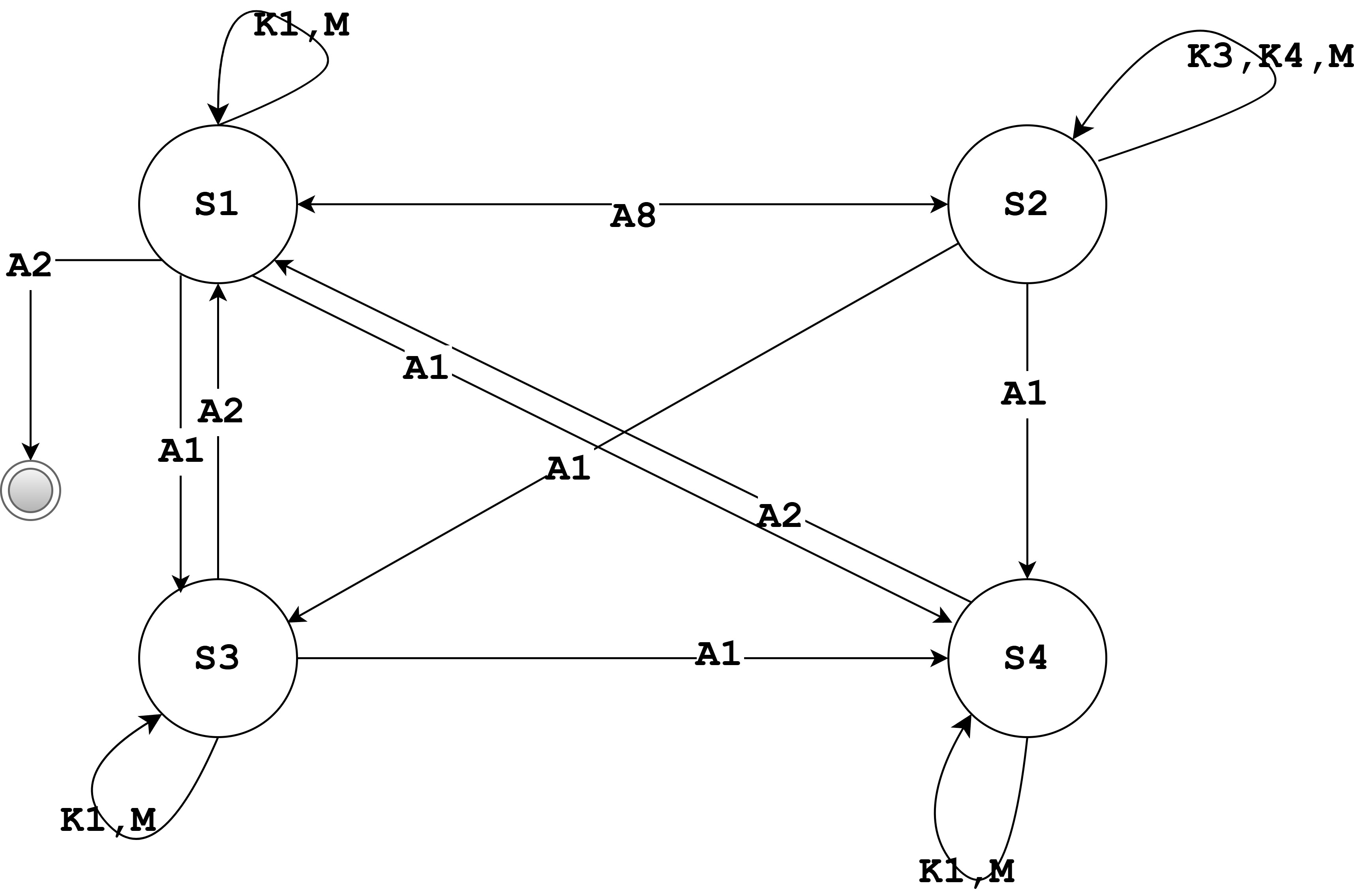}
    \caption{Formalized FSM derived from the benchmark ground truth log. States correspond to contextual application views; transitions are governed by semantically valid interaction events.}
    \label{fig:fsm-diagram}
    \vspace{-0.1in}
\end{figure}

\begin{table*}[!t]
\centering
\caption{State Descriptions, Actions, and Transitions in the FSM}
\label{state}
\begin{tabular}{|c|l|l|l|}
\hline
\textbf{State} & \textbf{Description} & \textbf{Actions} & \textbf{Transitions To} \\
\hline
\texttt{S1} & File Explorer Open & 
\texttt{K1}, \texttt{M} (Remain), \texttt{A8} (to S2), \texttt{A1} (to S3/S4), \texttt{A2} (to Terminal) & 
\texttt{S1}, \texttt{S2}, \texttt{S3}, \texttt{S4}, \texttt{Terminal} \\
\hline
\texttt{S2} & File Explorer Navigating & 
\texttt{K3}, \texttt{K4}, \texttt{M} (Remain), \texttt{A1} (to S3/S4), \texttt{A8} (to S1) & 
\texttt{S2}, \texttt{S1}, \texttt{S3}, \texttt{S4} \\
\hline
\texttt{S3} & Notepad Open & 
\texttt{K1}, \texttt{M} (Remain), \texttt{A1} (to S4), \texttt{A2} (to S1) & 
\texttt{S3}, \texttt{S1}, \texttt{S4} \\
\hline
\texttt{S4} & Calculator Open & 
\texttt{K1}, \texttt{M} (Remain), \texttt{A2} (to S1), \texttt{A1} (to S3) & 
\texttt{S4}, \texttt{S1}, \texttt{S3} \\
\hline
\texttt{Terminal} & End State & -- & -- \\
\hline
\end{tabular}
\end{table*}

\begin{table*}[!t]
\centering
\caption{Descriptions of Actions Used in FSM}
\label{actions}
\begin{tabular}{|c|l|}
\hline
\textbf{Action} & \textbf{Description} \\
\hline
\texttt{A1} & Switch to another application (e.g., from File Explorer to Notepad or Calculator) \\
\hline
\texttt{A2} & Exit current application or transition to Terminal (end state) \\
\hline
\texttt{A8} & Change view or navigate within File Explorer (e.g., from S1 to S2 or vice versa) \\
\hline
\texttt{K1} & Keyboard input within Notepad, Calculator, or File Explorer main screen \\
\hline
\texttt{K3/K4} & Keyboard or mouse-based file/folder navigation within File Explorer navigation view \\
\hline
\texttt{M} & Mouse hover action, typically does not cause state change \\
\hline
\end{tabular}
\vspace{-0.1in}
\end{table*}

\subsection{FSM-Constrained GFlowNet-Based Log Generation}

We formulate the problem of UI log synthesis as a trajectory sampling task over a finite set of semantically valid sequences, governed by an FSM. Let the FSM be defined as $\mathcal{M} = (Q, \Sigma, \delta, q_0, F)$, where $Q$ is the set of abstract application contexts (states), $\Sigma$ is the set of allowed actions (UI events), $\delta: Q \times \Sigma \rightarrow Q$ is the deterministic transition function, $q_0$ is the initial state, and $F$ denotes the terminal state(s).

We aim to learn a generative model that samples trajectories $\tau = \{(s_0, a_0), (s_1, a_1), \ldots, (s_T, a_T)\}$ such that:
\[
P_\theta(\tau) \propto R(\tau)
\]
where $P_\theta$ is the trajectory distribution induced by the GFlowNet policy and $R(\tau)$ is a trajectory-level reward function.

Unlike Markov Decision Processes (MDPs) or standard reinforcement learning (RL), where long-horizon credit assignment is difficult, GFlowNets offer a principled way to sample structured objects (trajectories) with probability proportional to a reward. However, applying GFlowNets directly to UI logs risks invalid transitions. Hence, we embed FSM logic as a constraint on the GFlowNet action space at every timestep. This ensures syntactic and semantic correctness throughout generation \cite{pan2023better}.

\subsubsection*{State and Action Representation}

At each time step $t$, the current state $s_t \in Q$ is encoded as:
\[
\phi(s_t, t) = \left[ \mathbf{1}_{s_t}; \frac{t}{T_{\text{max}}} \right] \in \mathbb{R}^{|Q|+1}
\]
where $\mathbf{1}_{s_t}$ is the one-hot encoding of the current FSM state and $t/T_{\text{max}}$ is a normalized time-depth scalar to provide position-awareness.

The policy network $\pi_\theta$ is a function approximator (a two-layer feedforward neural network) that outputs logits over the action space:
\[
\pi_\theta(a_t \mid s_t) = \text{softmax}(f_\theta(\phi(s_t, t)) + \log \mathbf{m}_{s_t})
\]
Here, $\mathbf{m}_{s_t} \in \{0, 1\}^{|\Sigma|}$ is a binary mask applied element-wise to enforce FSM-valid transitions:
\[
\mathbf{m}_{s_t}[i] = 
\begin{cases}
1, & \text{if } a_i \in \Sigma \text{ and } \delta(s_t, a_i) \text{ is defined} \\
0, & \text{otherwise}
\end{cases}
\]

\subsubsection*{Flow Matching Objective}
In standard policy gradient RL, the agent is optimized to maximize expected cumulative reward \cite{wang2022policy}. However, in structured domains like UI logs, reward signals are sparse and delayed (only at termination). GFlowNet overcomes this by enforcing flow conservation.

Following the GFlowNet formulation~\cite{krichel2024generalization,deleu2022bayesian,dutta2023toward}, we define a forward flow $F_\theta(s \rightarrow s')$ and a backward flow $B_\theta(s' \rightarrow s)$ for each transition. The training objective is to match the total incoming and outgoing flow at every non-terminal state $s$:
\[
\sum_{s': (s, a) \rightarrow s'} F_\theta(s \rightarrow s') = \sum_{s'': s'' \rightarrow s} B_\theta(s'' \rightarrow s)
\]

To optimize this, we use a trajectory-based loss:
\[
\mathcal{L}(\theta) = \mathbb{E}_{\tau \sim \pi_\theta} \left[-R(\tau) \cdot \sum_{t=0}^{T} \log \pi_\theta(a_t \mid s_t) \right]
\]

\subsubsection*{Reward Function and Termination Constraints}

The trajectory-level reward is defined to promote longer, FSM-compliant sequences that terminate correctly:
\[
R(\tau) = 
\begin{cases}
\log(|\tau| + 1), & \text{if } s_T \in F \text{ and } \forall t: \delta(s_t, a_t) \text{ is defined} \\
0, & \text{otherwise}
\end{cases}
\]

We further extend the generation process by injecting stochastic \texttt{M} (mouse-hover) actions before or after state-changing events with a fixed probability (e.g., $p=0.4$). This models fine-grained human interaction behavior and increases trajectory diversity without impacting FSM compliance.

\subsubsection*{Sampling and Exploration Strategy}

During generation, actions are sampled via a categorical distribution parameterized by the masked logits. To avoid convergence to trivial or repetitive trajectories, we incorporate an $\varepsilon$-greedy exploration policy:
\[
a_t \sim 
\begin{cases}
\text{Uniform}(\{a \in \Sigma \mid \delta(s_t, a) \text{ is valid}\}), & \text{with probability } \varepsilon \\
\pi_\theta(\cdot \mid s_t), & \text{otherwise}
\end{cases}
\]

This FSM-constrained GFlowNet architecture provides the best of both worlds: structural validity imposed by symbolic FSMs and generative diversity enabled by stochastic policy learning. The generated logs not only align with realistic application logic but also offer sufficient variability for downstream behavior modeling tasks. The procedural steps of the proposed FSM-Constrained GFlowNet framework, encompassing both model training and synthetic log generation, are systematically described in Algorithm~\ref{alg:fsm-gflownet}.

While this implementation focuses on UI interactions, the framework is modular: any domain where FSMs define valid transitions over symbolic events can instantiate the same training and sampling pipeline.

\begin{algorithm}[!t]
\caption{FSM-Constrained GFlowNet}
\label{alg:fsm-gflownet}
\SetAlgoLined
\KwIn{FSM $\mathcal{M} = (Q, \Sigma, \delta, q_0, F)$, episodes $E$, maximum steps $T_{\text{max}}$, model $\pi_\theta$, reward function $R(\tau)$}
\KwOut{Trained GFlowNet model, synthetic UI logs}

\textbf{Training Phase:}\;
Initialize model $\pi_\theta$ with random weights\;
\For{each episode $e \in \{1, \ldots, E\}$}{
    Initialize trajectory $\tau \leftarrow \{\}$\;
    Set initial state $s_0 \leftarrow q_0$, $t \leftarrow 0$\;
    
    \While{$s_t \notin F$ and $t < T_{\text{max}}$}{
        Encode state $\phi(s_t, t)$ (one-hot state + normalized step scalar)\;
        Obtain valid action mask $\mathbf{m}_{s_t}$ from FSM\;
        Compute masked logits: $l \leftarrow \pi_\theta(\phi(s_t, t)) + \log(\mathbf{m}_{s_t} + \epsilon)$\;
        
        Sample action $a_t$ using $\varepsilon$-greedy policy\;
        Append $(s_t, a_t)$ to trajectory $\tau$\;
        
        Update state $s_{t+1} \leftarrow \delta(s_t, a_t)$\;
        Increment step counter $t \leftarrow t+1$\;
        
        Optionally insert mouse hover (\texttt{M}) actions stochastically\;
    }
    Compute reward $R(\tau)$\;
    Update $\pi_\theta$ by minimizing loss $\mathcal{L} = -R(\tau) \sum_{t=0}^{|\tau|} \log \pi_\theta(a_t \mid s_t)$\;
}

\textbf{Log Generation Phase:}\;
\For{each required synthetic log}{
    Initialize $s_0 \leftarrow q_0$, $t \leftarrow 0$, empty log $\mathcal{L}$\;
    \While{length of $\mathcal{L}$ $<$ desired events}{
        Optionally insert \texttt{M} event with probability $p_{\text{hover}}$\;
        Encode state $\phi(s_t, t)$ and obtain valid action mask\;
        Sample next action $a_t$ from $\pi_\theta$ with FSM masking\;
        Append $(s_t, a_t)$ to $\mathcal{L}$\;
        Update $s_{t+1} \leftarrow \delta(s_t, a_t)$ (or reset to $q_0$ if terminal)\;
        Increment $t$\;
    }
    Save log $\mathcal{L}$ to disk in CSV format\;
}

\end{algorithm}

\section{Theoretical Analysis}
\label{sec:th}
\vspace{-0.01in}

We formally analyze the computational complexity and structural guarantees of the proposed FSM-constrained GFlowNet framework for synthetic UI log generation.

\begin{definition}
\textbf{Training Complexity:} Let $E$ denote the number of training episodes, $T_{\text{max}}$ the maximum number of steps per episode, $|Q|$ the number of FSM states, $|\Sigma|$ the number of possible actions, and $H$ the hidden dimension size of the policy network. Then, the computational complexity of training the FSM-constrained GFlowNet is given by:
\begin{equation}
\mathcal{O}\left(E \times T_{\text{max}} \times (|Q| + H + |\Sigma|)\right)
\end{equation}
\end{definition}

\begin{definition}
\textbf{Log Generation Complexity:} Let $N$ denote the number of synthetic logs generated, and $T_{\text{log}}$ the approximate number of events per log. The complexity of generating synthetic logs using the trained policy is given by:
\begin{equation}
\mathcal{O}\left(N \times T_{\text{log}} \times (|Q| + H + |\Sigma|)\right)
\end{equation}
\end{definition}

Given that $|Q|$ and $|\Sigma|$ are typically small and $H$ is moderate for task-specific FSMs, the overall framework remains computationally efficient and scalable for large-scale log generation.

\begin{lemma}
Every trajectory $\tau$ generated by the FSM-constrained GFlowNet strictly adheres to the valid transitions defined by the FSM $\mathcal{M}$.
\end{lemma}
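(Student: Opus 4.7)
The plan is to prove the lemma by induction on the trajectory length, exploiting the fact that FSM compliance is a local property: if every individual transition $(s_t, a_t) \to s_{t+1}$ respects $\delta$, then the whole trajectory is valid by composition. So the proof reduces to showing that the masking mechanism of Algorithm~\ref{alg:fsm-gflownet} assigns zero sampling probability to any action $a_t$ for which $\delta(s_t, a_t)$ is undefined, both in the learned policy branch and in the $\varepsilon$-greedy exploration branch.

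First I would establish the base case: by construction the sampler initializes $s_0 \leftarrow q_0 \in Q$, so the starting state is valid by the definition of $\mathcal{M}$. For the inductive step, I assume $s_t \in Q$ is reached via a prefix of valid transitions and analyze the two sampling regimes separately. In the exploitation branch, the action logits are computed as $l = f_\theta(\phi(s_t,t)) + \log \mathbf{m}_{s_t}$. Since $\mathbf{m}_{s_t}[i] = 0$ precisely when $\delta(s_t, a_i)$ is undefined, $\log \mathbf{m}_{s_t}[i] = -\infty$ for those indices, and applying softmax yields $\pi_\theta(a_i \mid s_t) = 0$. Hence only actions $a$ with $\delta(s_t, a)$ defined receive positive probability. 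In the exploration branch, the algorithm samples uniformly from the explicitly restricted set $\{a \in \Sigma \mid \delta(s_t, a) \text{ is valid}\}$, which again contains only FSM-admissible actions. Either way, the sampled $a_t$ satisfies $\delta(s_t, a_t) \in Q$, and the update $s_{t+1} \leftarrow \delta(s_t, a_t)$ keeps us within $Q$, closing the induction.

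Two subtleties must be addressed carefully. The first is the stochastic insertion of mouse-hover events \texttt{M}: I would verify, by inspection of the transition function $\delta$, that $\delta(s, M) = s$ is defined for every non-terminal $s \in \{S_1, S_2, S_3, S_4\}$, so injected \texttt{M} events are always FSM-valid self-loops and cannot break compliance. The second is the implementation detail where masked logits are computed as $\log(\mathbf{m}_{s_t} + \epsilon)$ rather than $\log \mathbf{m}_{s_t}$ to avoid numerical $-\infty$. I would note that this is a numerical smoothing artifact of the finite-precision implementation; under the exact mathematical formulation (or equivalently, in the $\epsilon \to 0$ limit) the forbidden-action probability is strictly zero, and in practice one can either take $\epsilon \to 0$ or compose the masking with a hard post-softmax renormalization over valid actions to recover exact compliance.

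The main obstacle, to the extent there is one, is precisely this $\epsilon$-smoothing gap between the idealized masked-softmax and its numerical realization. I would handle it by stating the lemma for the idealized policy, remarking that the algorithm implements this policy up to $\epsilon$-smoothing, and that the termination loop condition ($s_t \notin F$ and $t < T_{\max}$) together with the deterministic transition $\delta$ guarantees that the generated trajectory has length at most $T_{\max}$ and consists entirely of valid $(s_t, a_t) \to s_{t+1}$ steps, proving the claim.
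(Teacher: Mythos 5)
Your proof is correct and follows essentially the same argument as the paper's: the FSM mask zeroes out the probability of any action for which $\delta(s_t,a_t)$ is undefined, so every sampled transition is locally valid and the whole trajectory is valid by composition. Your version is actually more careful than the paper's one-line proof — in particular your handling of the $\varepsilon$-greedy branch, the \texttt{M} self-loops, and the $\log(\mathbf{m}_{s_t}+\epsilon)$ smoothing (which does technically leave nonzero mass on forbidden actions in the implemented algorithm) addresses gaps the paper silently passes over.
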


\begin{proof}
At each generation step $t$, the action space is dynamically masked based on the FSM transition function $\delta(s_t, a_t)$. Only actions that satisfy $\delta(s_t, a_t)$ being defined are permitted. Consequently, each transition $(s_t, a_t) \to s_{t+1}$ within a trajectory $\tau$ is valid according to the FSM $\mathcal{M}$. Thus, the generated trajectory remains structurally correct throughout the sampling process.
\end{proof}

\vspace{-0.3in}
\section{Comparative Evaluation}
\label{sec:eval}
\vspace{-0.1in}

To rigorously assess the effectiveness of our proposed FSM-guided GFlowNet framework for UI log synthesis, we conduct a comparative evaluation against three baselines: (i) real-world user logs from the UIC HCI dataset~\cite{theis2020hci}, (ii) unconstrained synthetic logs generated by large language models (LLMs) using prompt in Figure~\ref{fig:prompt-b}, and (iii) a Ground Truth (GT) log derived from GPT-4o using prompt in Figure~\ref{fig:prompt}.

\subsection{Baselines}
\subsubsection*{LLM-Based Synthetic Logs}
We construct our first baseline using two pretrained LLMs: \textbf{ChatGPT-4o (OpenAI)}~\cite{gpt4o_wikipedia} and \textbf{Gemini 1.5 7B (Google)}~\cite{gemma7b_huggingface}. Both models are prompted with an identical natural language description of the UI task (see Figure~\ref{fig:prompt-b}), which outlines the task goal, constraints on software usage (File Explorer, Notepad, Calculator), and output format. These prompts are intentionally unconstrained, allowing the models to simulate user behaviors without enforcing procedural correctness.

We generate 100 logs from each method, with each log containing approximately 1000–1500 UI events. To ensure consistency and fairness—especially considering the UIC HCI dataset comprises only five real user logs—we randomly sample five logs per method per evaluation run. This process is repeated across 100 independent iterations, and all reported results are averaged to ensure statistical robustness. All logs are parsed into structured CSVs and processed using a unified evaluation pipeline applied consistently across FSM-GFlowNet and LLM-based outputs.

\begin{figure*}[!t]
    \centering
    \includegraphics[width=0.9\linewidth]{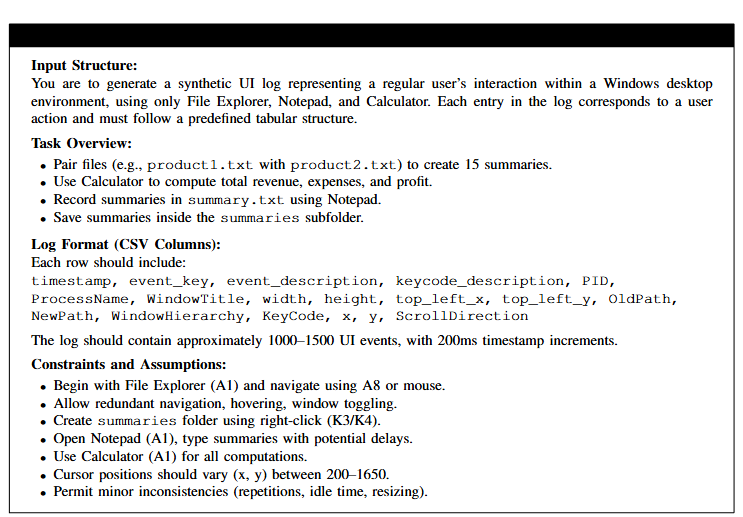}  
  \caption{{Prompt used for LLM-based generation of baseline synthetic UI logs. Designed to simulate a non-expert user with realistic variability.}}
    \label{fig:prompt-b}
    \vspace{-0.1in}
\end{figure*}

\subsubsection*{Real User Logs}
The second baseline consists of actual user interaction traces from the UIC HCI logs dataset~\cite{theis2020hci} as discussed in the Section~\ref{data}.

\subsubsection*{Ground Truth (GT) Log}
The third baseline is a single high-quality trajectory synthesized via prompt engineering with GPT-4o, as detailed in Section~\ref{subsec:gt-construction}. This \textit{Ground Truth (GT)} log represents an \textbf{optimal task} execution: it is deterministic, semantically valid, and conforms strictly to the FSM-derived task logic. While lacking in user-level diversity, the GT serves as a semantic benchmark for evaluating how closely generation methods approximate expert-level action planning.

\subsection{Data Preprocessing}
Prior to quantitative evaluation, all real and synthetic UI logs underwent a standardized cleaning procedure to ensure consistency and comparability across sources. For each log file, we retained only the fields critical to structural evaluation: the user interface \texttt{state} and the associated \texttt{event} type.

Specifically, the \texttt{event} field was processed by extracting only the high-level action categories, such as \texttt{A1} and \texttt{K3}, while discarding finer-grained descriptive information. This extraction isolates the abstracted user interactions necessary for modeling and evaluation. Furthermore, all auxiliary metadata fields, including timestamps, window dimensions, process identifiers, and cursor positions, were removed from the logs. Only the \texttt{state} and \texttt{event} columns were preserved for subsequent analysis.

This cleaning process ensures that all evaluation metrics focus solely on the dynamics of action-state sequences, eliminating any confounding factors arising from low-level UI fluctuations. By standardizing the representation across real and synthetic logs, we enable fair and focused assessment of structural fidelity, behavioral diversity, and generative realism.

\begin{table*}[!t]
\centering
\caption{Quantitative comparison (aggregated distribution) of FSM-GFlowNet and baselines evaluated under two references: (a) real user logs, and (b) the Ground Truth (GT) trajectory.}

\label{tab:quantitative-evaluation}
\renewcommand{\arraystretch}{1.2}
\setlength{\tabcolsep}{15pt}

\textbf{(a) Real User Logs (UIC HCI)}
\vspace{2mm}

\begin{tabular}{|l|c|c|c|c|}
\hline
\textbf{Dataset} & \textbf{KL Divergence} & \textbf{Chi-squared} & \textbf{Entropy} & \textbf{Bigram Overlap} \\
\hline
Real                 & 0.0088  & 0.0175  & 1.3198 & 0.9679 \\
GPT-4o               & 2.5294  & 13.8020 & 0.4445 & 0.0028 \\
Gemini               & 3.7233  & 63.0355 & 1.0417 & 0.0007 \\
Ground Truth (GT)    & 1.5648  & 4.5594  & 1.5892 & 0.0061 \\
\textbf{FSM-GFlowNet} & \textbf{0.2769} & \textbf{0.3522} & \textbf{0.5979} & \textbf{0.1214} \\
\hline
\end{tabular}

\vspace{6mm}
\textbf{(b) Ground Truth (GT)}
\vspace{2mm}

\begin{tabular}{|l|c|c|c|c|}
\hline
\textbf{Dataset} & \textbf{KL Divergence} & \textbf{Chi-squared} & \textbf{Entropy} & \textbf{Bigram Overlap} \\
\hline
Real                 & 14.8965 & $4.59 \times 10^8$ & 1.3198 & 0.9679 \\
GPT-4o               & 0.6487  & 0.9473             & 0.4445 & 0.0028 \\
Gemini               & 2.1526  & $3.00 \times 10^5$ & 1.0417 & 0.0007 \\
Ground Truth (GT)     & 0.0000  & 0.0000             & 1.5892 & 0.0061 \\
\textbf{FSM-GFlowNet} & \textbf{16.9481} & \textbf{$6.86 \times 10^8$} & \textbf{0.5979} & \textbf{0.1214} \\
\hline
\end{tabular}
\end{table*}

\begin{figure*}[!t]
\centering
\label{figs-real}
\captionsetup[subfigure]{labelformat=empty}

\begin{subfigure}[t]{0.45\textwidth}
    \centering
    \includegraphics[width=\textwidth]{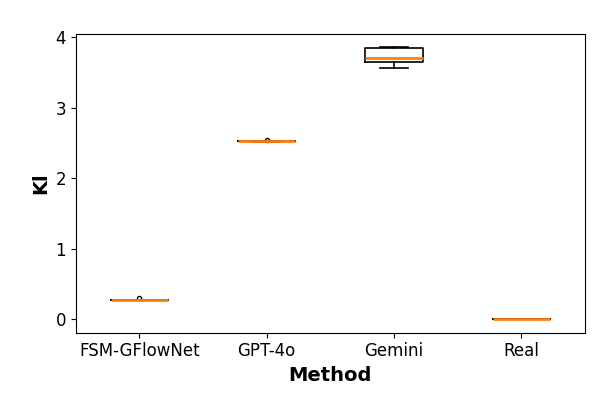}
    \caption{\textbf{(a)} KL Divergence}
\end{subfigure}
\hfill
\begin{subfigure}[t]{0.45\textwidth}
    \centering
    \includegraphics[width=\textwidth]{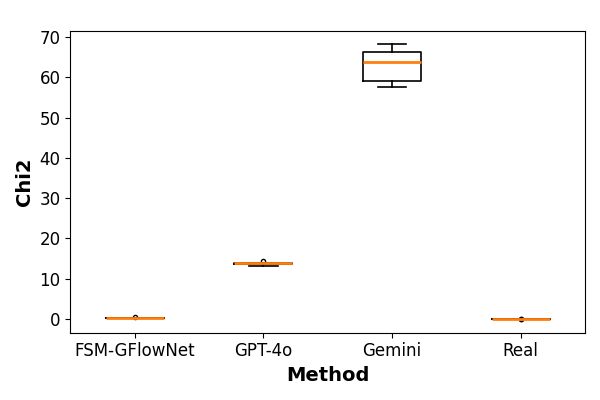}
    \caption{\textbf{(b)} Chi-Squared Distance}
\end{subfigure}

% \vspace{1mm}

\begin{subfigure}[t]{0.45\textwidth}
    \centering
    \includegraphics[width=\textwidth]{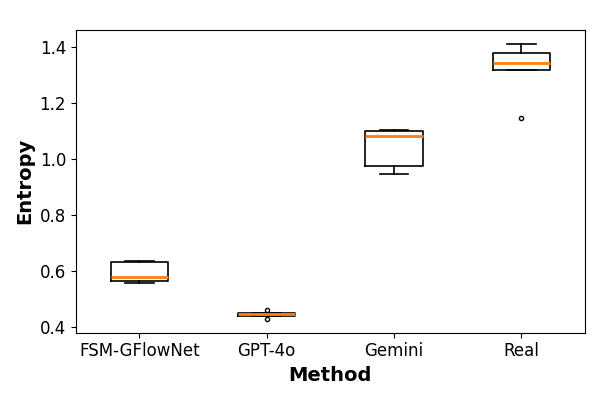}
    \caption{\textbf{(c)} Entropy}
\end{subfigure}
\hfill
\begin{subfigure}[t]{0.45\textwidth}
    \centering
    \includegraphics[width=\textwidth]{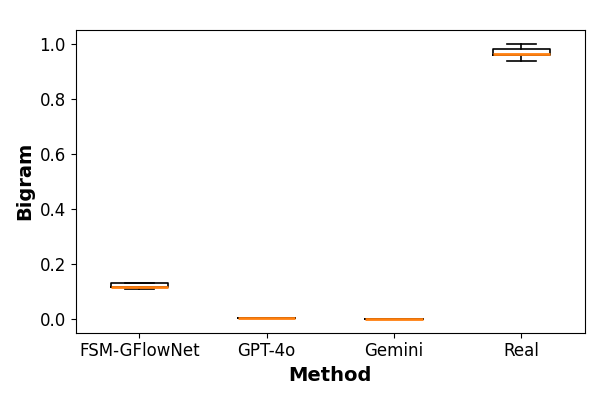}
    \caption{\textbf{(d)} Bigram Overlap}
\end{subfigure}

\caption{Distribution of evaluation metrics across methods using \textbf{real users (UIC HCI) logs as baseline}.}
\label{fig:boxplots-real-baseline}
\vspace{-0.2in}
\end{figure*}

\begin{figure*}[!t]
\centering
\label{figs-gt}
\captionsetup[subfigure]{labelformat=empty}

\begin{subfigure}[t]{0.45\textwidth}
    \centering
    \includegraphics[width=\textwidth]{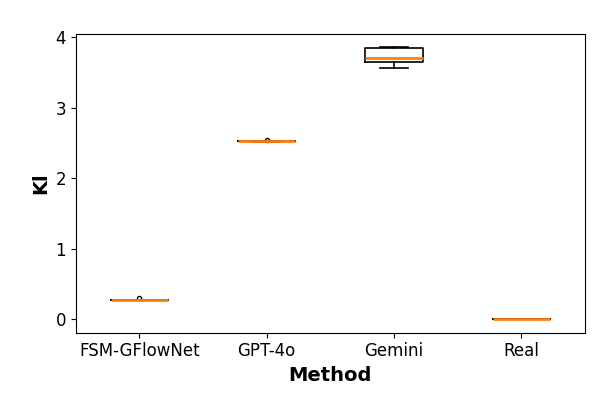}
    \caption{\textbf{(a)} KL Divergence}
\end{subfigure}
\hfill
\begin{subfigure}[t]{0.45\textwidth}
    \centering
    \includegraphics[width=\textwidth]{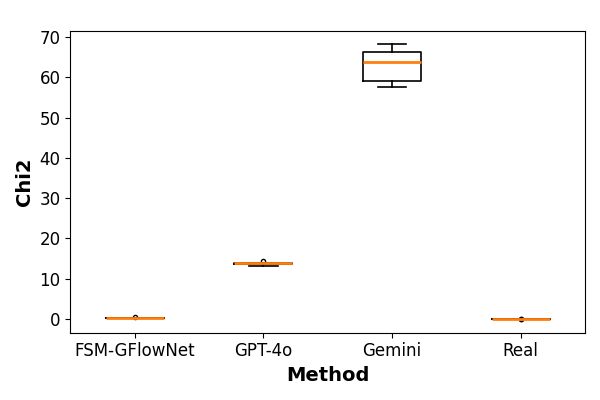}
    \caption{\textbf{(b)} Chi-Squared Distance}
\end{subfigure}

% \vspace{1mm}

\begin{subfigure}[t]{0.45\textwidth}
    \centering
    \includegraphics[width=\textwidth]{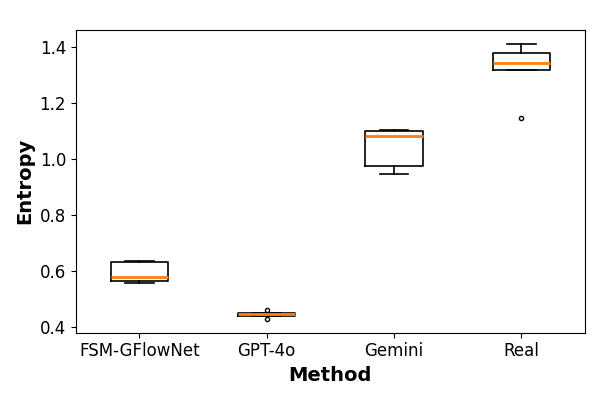}
    \caption{\textbf{(c)} Entropy}
\end{subfigure}
\hfill
\begin{subfigure}[t]{0.45\textwidth}
    \centering
    \includegraphics[width=\textwidth]{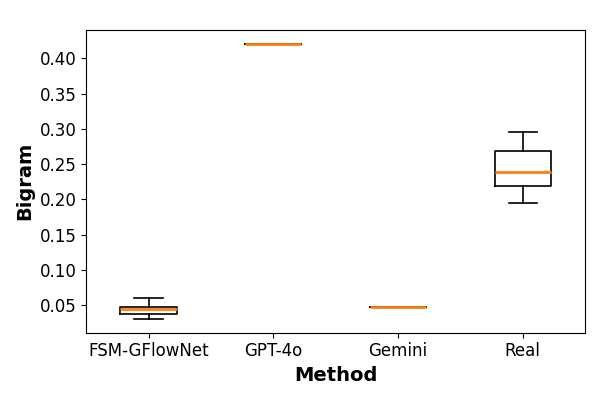}
    \caption{\textbf{(d)} Bigram Overlap}
\end{subfigure}

\caption{Distribution of evaluation metrics across methods using \textbf{ground truth (GT) logs as baseline}.}
\label{fig:boxplots-gt-baseline}
\vspace{-0.2in}
\end{figure*}

\subsection{Evaluation Metrics}

To quantitatively evaluate the distributional fidelity and structural realism of synthetic UI logs, we use four well-established statistical metrics:

\begin{itemize}
    \item \textbf{Kullback-Leibler (KL) Divergence}: Given two discrete probability distributions $P = \{p_1, \dots, p_n\}$ (baseline) and $Q = \{q_1, \dots, q_n\}$ (generated), the KL divergence is computed as:
    \begin{equation}
        \mathrm{KL}(Q \parallel P) = \sum_{i=1}^{n} q_i \log\left(\frac{q_i}{p_i + \epsilon}\right),
    \end{equation}
    where $\epsilon$ is a small constant added for numerical stability. This metric captures the divergence of event frequencies in the generated log from the baseline distribution.

    \item \textbf{Chi-Squared ($\chi^2$) Distance}: For two distributions $O = \{o_1, \dots, o_n\}$ (observed counts) and $E = \{e_1, \dots, e_n\}$ (expected counts), the Chi-squared distance is computed as:
    \begin{equation}
        \chi^2(O, E) = \sum_{i=1}^{n} \frac{(o_i - e_i)^2}{e_i + \epsilon},
    \end{equation}
    where $\epsilon$ prevents division by zero. This metric quantifies statistical deviation in event counts between the synthetic and baseline logs.

    \item \textbf{Entropy}: Entropy measures the uncertainty or diversity in the distribution of actions within a log. For a probability distribution $Q = \{q_1, \dots, q_n\}$ over actions, the entropy is:
    \begin{equation}
        \mathrm{H}(Q) = -\sum_{i=1}^{n} q_i \log(q_i + \epsilon).
    \end{equation}

    \item \textbf{Bigram Overlap}: This metric measures local structural coherence by computing the fraction of bigrams (consecutive action pairs) in the generated sequence that are also present in the baseline log:
    \begin{equation}
        \mathrm{BigramOverlap}(S_g, S_b) = \frac{|B(S_g) \cap B(S_b)|}{\max(|B(S_b)|, 1)},
    \end{equation}
    where $B(S)$ denotes the multiset of bigrams from sequence $S$, and $S_g$, $S_b$ refer to generated and baseline sequences respectively.
\end{itemize}

Each synthetic log is evaluated against two reference baselines: (i) the aggregate distribution of real user logs from the UIC HCI dataset, and (ii) a deterministic Ground Truth (GT) trajectory synthesized via GPT-4o, which serves as an optimal sequence benchmark.

To comprehensively assess fidelity across behavioral scales, we conduct four experiments: (i) aggregate-level comparison using real logs as the baseline (see Table~\ref{tab:quantitative-evaluation}(a)), (ii) aggregate-level comparison using the GT as baseline (see Table~\ref{tab:quantitative-evaluation}(b)), (iii) per-file comparison with real logs (see Figure~\ref{fig:boxplots-real-baseline}), and (iv) per-file comparison with the GT (see Figure~\ref{fig:boxplots-gt-baseline}).

\subsection{Results and Analysis}
\label{subsec:results}

\subsubsection{Evaluation with Real User Logs as Baseline}

Under the real user logs baseline (Table~\ref{tab:quantitative-evaluation}(a)), FSM-GFlowNet achieves significantly lower divergence values—\textbf{KL divergence} of $0.2769$ and \textbf{Chi-squared distance} ($\chi^2$) of $0.3522$—second only to the real logs themselves, and far outperforming GPT-4o ($2.5294$, $13.8020$) and Gemini ($3.7233$, $63.0355$). This indicates that FSM-GFlowNet preserves task-relevant action frequencies in a manner that closely mirrors authentic user behavior.

These trends are strongly corroborated by the box plots in Figure~\ref{fig:boxplots-real-baseline}, which show not only a lower median for FSM-GFlowNet across all metrics, but also a \textbf{tight interquartile range (IQR)} and \textbf{minimal outliers}, reflecting its statistical robustness and generation stability. In contrast, GPT-4o and Gemini exhibit wider IQRs and noticeable outliers in both KL and $\chi^2$, pointing to unpredictable and inconsistent outputs.

In terms of \textbf{entropy}, which captures behavioral diversity, FSM-GFlowNet produces moderately diverse outputs ($\approx 0.60$), well below real logs ($\approx 1.32$), but importantly, more diverse than GPT-4o ($0.4445$). While Gemini shows higher entropy ($\approx 1.0417$), the variance across its runs (as seen in the entropy box plot) is also significantly higher, suggesting unstructured or noisy variability. The entropy box for FSM-GFlowNet, on the other hand, remains compact and stable across runs.

Most notably, FSM-GFlowNet leads by a large margin in \textbf{bigram overlap} ($0.1214$), whereas GPT-4o ($0.0028$) and Gemini ($0.0007$) produce almost negligible overlap. The box plots clearly show that only FSM-GFlowNet maintains a positively separated distribution in this metric, capturing valid short-term transitions between actions, which is critical for modeling realistic UI sequences.

\subsubsection{Evaluation with Ground Truth (GT) Trajectory as Baseline}

Using the GT trajectory as the reference (Table~\ref{tab:quantitative-evaluation}(b)), GPT-4o performs best in terms of divergence metrics—\textbf{KL divergence} of $0.6487$ and \textbf{Chi-squared distance} of $0.9473$—which is expected given that the GT was constructed using GPT-4o prompting. FSM-GFlowNet, in contrast, yields higher divergence values (KL = $16.9481$, $\chi^2 = 6.86 \times 10^8$), since it does not attempt to replicate the GT path deterministically. However, these high divergence scores reflect \textbf{constructive deviation}, capturing richer behavioral variants rather than failure. The strength of FSM-GFlowNet becomes particularly evident in the \textbf{bigram overlap}, where it again achieves the highest average score ($0.1214$) and the \textbf{most stable distribution}, as seen in Figure~\ref{fig:boxplots-gt-baseline}. Unlike GPT-4o and Gemini, which exhibit near-zero overlap and little variability, FSM-GFlowNet preserves FSM-consistent short-horizon transitions even when diverging from the GT path. This suggests generalization beyond optimal sequences while retaining structural fidelity. In terms of \textbf{entropy}, FSM-GFlowNet shows consistent moderate diversity, while GPT-4o logs are tightly clustered with low entropy, indicating repetitive generation. Gemini once again displays high entropy but with large dispersion, signifying erratic transitions and weak alignment with the GT.

% \vspace{-0.1in}
It is also important to highlight the \textbf{asymmetric nature of the $\chi^2$ metric}, which helps explain the disproportionately high $\chi^2$ values observed when the Ground Truth (GT) trajectory is used as the baseline. Unlike real logs that represent a broader and smoother distribution over diverse user behaviors, the GT is a single optimal trace with a narrow action set and deterministic transitions. As a result, events that are absent in the GT but commonly occur in real or FSM-GFlowNet logs; for instance, the action ``M'' (mouse hover), which was omitted in the GT for efficiency, yield high residuals in the $\chi^2$ computation. Since $\chi^2$ distance penalizes mismatches based on expected frequencies, even minor deviations become amplified when the denominator (expected count) is near zero. Conversely, when real logs are used as the baseline, their natural variability provides smoother expected distributions, leading to more stable and interpretable divergence scores. Therefore, the apparent discrepancy in $\chi^2$ values under GT and real baselines stems from the metric’s statistical sensitivity to the sparsity and peakedness of the reference distribution.

\section{Use-case: Intent Classification}
\label{sec:use}

To demonstrate the practical utility of our framework beyond generative evaluation, we apply FSM-GFlowNet-generated logs to a downstream task: \textbf{intent classification}. This task evaluates whether models trained solely on synthetic event sequences can accurately infer user intent from real-world interaction logs—thus validating the semantic fidelity and behavioral realism of the generated data.

\vspace{-0.1in}
\subsection{Experimental Setup}
We use 50000 log files, each having 1000-1500 rows (events), produced from FSM-GFlowNet sampling, as training data. Real user logs from the UIC HCI dataset are reserved exclusively for testing. Each log entry is represented as a $(\texttt{state}, \texttt{event})$ pair. Using domain-informed heuristics, we define intent labels as follows:
\begin{itemize}
    \item \textbf{Open\_App}: $\texttt{event} = \texttt{A1}$ or $\texttt{state} = \texttt{S1}$
    \item \textbf{navigate}: $\texttt{event} = \texttt{A8}$ or $\texttt{state} = \texttt{S2}$
    \item \textbf{Edit}: $\texttt{event} \in \{\texttt{K1}, \texttt{K3}, \texttt{K4}\}$ or $\texttt{state} \in \{\texttt{S3}, \texttt{S4}\}$
\end{itemize}

We encode each $(\texttt{state}, \texttt{event})$ combination as a categorical feature using \texttt{CountVectorizer}, then train two standard linear classifiers—Logistic Regression and a linear-kernel Support Vector Machine (SVM). 

\vspace{-0.1in}
\subsection{Evaluation on Real Logs}

The trained classifiers are evaluated on real interaction sequences from the UIC HCI dataset. The performance results are shown in Table~\ref{tab:intent-classification-results}.

\begin{table}[ht]
\centering
\caption{Intent classification accuracy and macro F1-score}
\vspace{-0.1in}
\label{tab:intent-classification-results}
\begin{tabular}{|l|c|c|}
\hline
\textbf{Model} & \textbf{Accuracy (\%)} & \textbf{Macro F1-score} \\
\hline
Logistic Regression & 77.58 & 0.431 \\
SVM (Linear)        & 77.58 & 0.431 \\
\hline
\end{tabular}
\vspace{-0.1in}
\end{table}

\subsection{Discussion and Implications}
\vspace{-0.05in}
Despite being trained exclusively on synthetic data, both models generalize effectively to real-world UI logs, achieving approximately 78\% accuracy. This demonstrates that FSM-GFlowNet-generated logs retain meaningful semantic and behavioral patterns necessary for intent recognition.

Notably, the classifiers perform well on high-frequency intents such as \texttt{navigate} and \texttt{Open\_App}, affirming the structural alignment between synthetic and real data. While performance on underrepresented intents (e.g., \texttt{Edit}) is lower—due to real-world imbalance and potential contextual variance—the overall results highlight successful transfer of behavioral signals.

This experiment underscores the broader applicability of FSM-GFlowNet in HCI and log-based modeling scenarios. By providing structurally grounded yet diverse synthetic sequences, the framework facilitates scalable training of interaction-aware models—particularly valuable in settings with limited or privacy-constrained user data.

\vspace{-0.05in}
\section{Conclusion and Future Directions}
\label{sec:con}
\vspace{-0.05in}
We have presented a framework for structured synthetic log generation that integrates Finite State Machines (FSMs) with Generative Flow Networks (GFlowNets). By leveraging FSMs to encode domain-specific interaction rules and constraining GFlowNet sampling via dynamic action masking, the proposed approach ensures that each generated sequence is both semantically valid and behaviorally diverse. The framework is trained using a flow-matching objective with a hybrid reward that balances strict structural compliance with statistical alignment to real-world data. While this framework is instantiated and evaluated in the domain of UI interaction modeling—using FSMs derived from expert GPT-4o trajectories and real logs from the UIC HCI dataset—its modularity and symbolic design allow it to generalize across domains where discrete symbolic sequences and rule-based transitions are essential. These include cybersecurity log simulation, robotic task planning, educational activity tracing, and workflow execution modeling.

Empirical results based on distributional metrics (KL divergence, Chi-squared distance, entropy, and bigram overlap) confirm the high fidelity of our generated logs. Moreover, we demonstrate real-world applicability through a downstream task—intent classification—where classifiers trained solely on synthetic FSM-GFlowNet logs exhibit strong generalization to real user sessions. These outcomes establish FSM-GFlowNet as an effective tool for scalable behavioral simulation, training data augmentation, and sequence-aware modeling in structured interaction domains.

Despite these advantages, the framework inherits certain limitations from FSM-based modeling. Specifically, FSM construction requires domain expertise and explicit encoding of valid transitions, which may limit its out-of-the-box applicability to highly dynamic or ill-defined environments. Additionally, our current experiments focus on discrete symbolic events; extending the framework to handle continuous or multimodal sequences, such as gaze movements, speech commands, or sensor traces, is a promising direction.

Future work will explore automatic FSM induction via unsupervised learning or LLM-guided program synthesis, integration with reinforcement learning for task optimization, and multi-agent extensions of GFlowNet-based sampling for collaborative and adversarial interaction modeling. These directions aim to further enhance the adaptability, automation, and scalability of the FSM-GFlowNet paradigm.

\bibliographystyle{IEEEtran}
\bibliography{sample}

% Generated by IEEEtran.bst, version: 1.14 (2015/08/26)
\begin{thebibliography}{10}
\providecommand{\url}[1]{#1}
\csname url@samestyle\endcsname
\providecommand{\newblock}{\relax}
\providecommand{\bibinfo}[2]{#2}
\providecommand{\BIBentrySTDinterwordspacing}{\spaceskip=0pt\relax}
\providecommand{\BIBentryALTinterwordstretchfactor}{4}
\providecommand{\BIBentryALTinterwordspacing}{\spaceskip=\fontdimen2\font plus
\BIBentryALTinterwordstretchfactor\fontdimen3\font minus \fontdimen4\font\relax}
\providecommand{\BIBforeignlanguage}[2]{{%
\expandafter\ifx\csname l@#1\endcsname\relax
\typeout{** WARNING: IEEEtran.bst: No hyphenation pattern has been}%
\typeout{** loaded for the language `#1'. Using the pattern for}%
\typeout{** the default language instead.}%
\else
\language=\csname l@#1\endcsname
\fi
#2}}
\providecommand{\BIBdecl}{\relax}
\BIBdecl

\bibitem{carrera2023exploring}
A.~Carrera-Rivera, D.~Reguera-Bakhache, F.~Larrinaga, and G.~Lasa, ``Exploring the transformation of user interactions to adaptive human-machine interfaces,'' in \emph{Proceedings of the XXIII International Conference on Human Computer Interaction}, 2023, pp. 1--7.

\bibitem{theis2020hci}
\BIBentryALTinterwordspacing
J.~Theis and H.~Darabi, ``Human-computer interaction logs,'' Dataset, 2020. [Online]. Available: \url{https://doi.org/10.25417/uic.11923386.v1}
\BIBentrySTDinterwordspacing

\bibitem{pradhan2025getting}
S.~K. Pradhan, M.~Jans, and N.~Martin, ``Getting the data in shape for your process mining analysis: An in-depth analysis of the pre-analysis stage,'' \emph{ACM Computing Surveys}, vol.~57, no.~6, pp. 1--37, 2025.

\bibitem{samanta2024ctg}
R.~Samanta, B.~Saha, S.~K. Ghosh, and S.~K. Das, ``Ctg-krew: Generating synthetic structured contextually correlated content by conditional tabular gan with k-means clustering and efficient word embedding,'' \emph{arXiv preprint arXiv:2409.01628}, 2024.

\bibitem{wang2024open}
Y.~Wang, ``From open access to guarded trust: Experimenting responsibly in the age of data privacy,'' \emph{Queue}, vol.~22, no.~1, pp. 100--113, 2024.

\bibitem{le2023exploring}
T.~Le, A.~Wang, Y.~Yao, Y.~Feng, A.~Heydarian, N.~Sadeh, and Y.~Tian, ``Exploring smart commercial building occupants’ perceptions and notification preferences of internet of things data collection in the united states,'' in \emph{2023 IEEE 8th European Symposium on Security and Privacy (EuroS\&P)}.\hskip 1em plus 0.5em minus 0.4em\relax IEEE, 2023, pp. 1030--1046.

\bibitem{balog2025user}
K.~Balog and C.~Zhai, ``User simulation in the era of generative ai: User modeling, synthetic data generation, and system evaluation,'' \emph{arXiv preprint arXiv:2501.04410}, 2025.

\bibitem{berkovitch2024identifying}
O.~Berkovitch, S.~Caduri, N.~Kahlon, A.~Efros, A.~Caciularu, and I.~Dagan, ``Identifying user goals from ui trajectories,'' \emph{arXiv preprint arXiv:2406.14314}, 2024.

\bibitem{theis2019behavioral}
J.~T. et~al., ``Behavioral petri net mining and automated analysis for human-computer interaction recommendations in multi-application environments,'' vol.~3, no. EICS.\hskip 1em plus 0.5em minus 0.4em\relax ACM New York, NY, USA, 2019, pp. 1--16.

\bibitem{kanade2023fsm}
V.~Kanade, ``Finite state machine meaning, working, and examples,'' \url{https://www.spiceworks.com/tech/tech-general/articles/what-is-fsm/}, 2023, spiceworks, Accessed: 2025-04-23.

\bibitem{xuagenttrek}
Y.~Xu, D.~Lu, Z.~Shen, J.~Wang, Z.~Wang, Y.~Mao, C.~Xiong, and T.~Yu, ``Agenttrek: Agent trajectory synthesis via guiding replay with web tutorials,'' in \emph{The Thirteenth International Conference on Learning Representations}.

\bibitem{confidentai2024synthetic}
{Confident AI}, ``Using llms for synthetic data generation: The definitive guide,'' \url{https://www.confident-ai.com/blog/the-definitive-guide-to-synthetic-data-generation-using-llms}, 2024, accessed: 2025-04-23.

\bibitem{huang2025survey}
L.~Huang, W.~Yu, W.~Ma, W.~Zhong, Z.~Feng, H.~Wang, Q.~Chen, W.~Peng, X.~Feng, B.~Qin \emph{et~al.}, ``A survey on hallucination in large language models: Principles, taxonomy, challenges, and open questions,'' \emph{ACM Transactions on Information Systems}, vol.~43, no.~2, pp. 1--55, 2025.

\bibitem{krichel2024generalization}
A.~Krichel, N.~Malkin, S.~Lahlou, and Y.~Bengio, ``On generalization for generative flow networks,'' \emph{arXiv preprint arXiv:2407.03105}, 2024.

\bibitem{deleu2022bayesian}
T.~Deleu, A.~G{\'o}is, C.~Emezue, M.~Rankawat, S.~Lacoste-Julien, S.~Bauer, and Y.~Bengio, ``Bayesian structure learning with generative flow networks,'' in \emph{Uncertainty in Artificial Intelligence}.\hskip 1em plus 0.5em minus 0.4em\relax PMLR, 2022, pp. 518--528.

\bibitem{martinez2022tool}
A.~Mart{\'\i}nez-Rojas, A.~Jim{\'e}nez-Ram{\'\i}rez, J.~Gonz{\'a}lez-Enr{\'\i}quez, and H.~A. Reijers, ``A tool-supported method to generate user interface logs,'' \emph{Journal of Software and Systems Modeling}, 2022.

\bibitem{reijers2023rpa}
H.~A. Reijers \emph{et~al.}, ``Synthetic event log generation for rpa process evaluation,'' \emph{IEEE Transactions on Emerging Topics in Computing}, 2023.

\bibitem{partovian2024leveraging}
S.~Partovian, F.~Flammini, and A.~Bucaioni, ``Leveraging gans to generate synthetic log files for smart-troubleshooting in industry 4.0,'' in \emph{2024 50th Euromicro Conference on Software Engineering and Advanced Applications (SEAA)}.\hskip 1em plus 0.5em minus 0.4em\relax IEEE, 2024, pp. 475--482.

\bibitem{zhang2018synthetic}
D.~Zhang, C.~Yuntian, and M.~Jin, ``Synthetic well logs generation via recurrent neural networks,'' \emph{Petroleum Exploration and Development}, vol.~45, no.~4, pp. 629--639, 2018.

\bibitem{xu2024unilog}
J.~Xu, Z.~Cui, Y.~Zhao, X.~Zhang, S.~He, P.~He, L.~Li, Y.~Kang, Q.~Lin, Y.~Dang \emph{et~al.}, ``Unilog: Automatic logging via llm and in-context learning,'' in \emph{Proceedings of the 2024 International Conference on Software Engineering (ICSE)}.\hskip 1em plus 0.5em minus 0.4em\relax ACM, 2024.

\bibitem{li2024exploring}
Y.~Li, Y.~Huo, Z.~Jiang, R.~Zhong, P.~He, Y.~Su, L.~C. Briand, and M.~R. Lyu, ``Exploring the effectiveness of llms in automated logging statement generation: An empirical study,'' \emph{IEEE Transactions on Software Engineering}, 2024.

\bibitem{king2024thoughtful}
E.~King, H.~Yu, S.~Vartak, J.~Jacob, S.~Lee, and C.~Julien, ``Thoughtful things: Building human-centric smart devices with small language models,'' \emph{arXiv preprint arXiv:2405.03821}, 2024.

\bibitem{berti2023abstractions}
A.~Berti, D.~Schuster, and W.~M. van~der Aalst, ``Abstractions, scenarios, and prompt definitions for process mining with llms: A case study,'' in \emph{International Conference on Business Process Management}.\hskip 1em plus 0.5em minus 0.4em\relax Springer, 2023, pp. 427--439.

\bibitem{petruzzellis2024benchmarking}
F.~Petruzzellis, A.~Testolin, and A.~Sperduti, ``Benchmarking gpt-4 on algorithmic problems: A systematic evaluation of prompting strategies,'' \emph{arXiv preprint arXiv:2402.17396}, 2024.

\bibitem{gpt4o_wikipedia}
{Wikipedia contributors}, ``Gpt-4o,'' \url{https://en.wikipedia.org/wiki/GPT-4o}, 2024, accessed: 2024-04-29.

\bibitem{pan2023better}
L.~Pan, N.~Malkin, D.~Zhang, and Y.~Bengio, ``Better training of gflownets with local credit and incomplete trajectories,'' in \emph{International Conference on Machine Learning}.\hskip 1em plus 0.5em minus 0.4em\relax PMLR, 2023, pp. 26\,878--26\,890.

\bibitem{wang2022policy}
Y.~Wang and S.~Zou, ``Policy gradient method for robust reinforcement learning,'' in \emph{International conference on machine learning}.\hskip 1em plus 0.5em minus 0.4em\relax PMLR, 2022, pp. 23\,484--23\,526.

\bibitem{dutta2023toward}
S.~Dutta, M.~Das, and U.~Maulik, ``Toward causality-based explanation of aerial scene classifiers,'' \emph{IEEE Geoscience and Remote Sensing Letters}, vol.~21, pp. 1--5, 2023.

\bibitem{gemma7b_huggingface}
{Google Researchers}, ``Google gemma 7b model card,'' \url{https://huggingface.co/google/gemma-7b}, 2024, accessed: 2024-04-29.

\end{thebibliography}

\end{document}